\documentclass[11pt]{article}

\newcommand{\aop}{Ann. Phys.~}
\newcommand{\cmp}{Comm. Math. Phys.~}

\newcommand{\jmp}{J. Math. Phys.~}
\newcommand{\jpa}{J. Phys. A~}

\newcommand{\prl}{Phys. Rev. Lett.~}
\newcommand{\pra}{Phys. Rev. A~}
\newcommand{\pla}{Phys. Lett. A~}


\usepackage{color}
\usepackage{subfigure}
\usepackage[T1]{fontenc}
\usepackage[sc]{mathpazo}
\usepackage{amsmath}
\usepackage{amssymb}
\usepackage{graphicx,color}
\usepackage{framed}
\usepackage{multirow}
\usepackage{enumerate}
\usepackage{amsthm}
\usepackage{amsfonts,mathrsfs}
\usepackage{geometry} 
\usepackage{eepic}
\usepackage{ifthen}
\usepackage[vcentermath]{youngtab}
\usepackage[unicode=true,pdfusetitle, bookmarks=true,bookmarksnumbered=false,bookmarksopen=false, breaklinks=false,pdfborder={0 0 0},backref=false,colorlinks=false] {hyperref}
\hypersetup{
colorlinks,linkcolor=myurlcolor,citecolor=myurlcolor,urlcolor=myurlcolor}
\definecolor{myurlcolor}{rgb}{0,0,0.7}

\geometry{
  verbose,
  dvips,
  width=422.695pt, marginparsep=0pt, marginparwidth=0pt,
  top=72.27pt, headheight=12pt, headsep=36pt, footskip=30pt, bottom=72.27pt
}


\usepackage{color}


\setlength{\topmargin}{0in} \setlength{\headheight}{0in}
\setlength{\headsep}{0.0in} \setlength{\textheight}{8.85in}
\setlength{\oddsidemargin}{0in} \setlength{\evensidemargin}{0in}
\setlength{\textwidth}{6.5in}


\usepackage{hyperref}
\hypersetup{pdfpagemode=UseNone}


\newcommand{\tinyspace}{\mspace{1mu}}

\newcommand{\abs}[1]{\left\lvert\tinyspace #1 \tinyspace\right\rvert}

\renewcommand{\t}{{\scriptscriptstyle\mathsf{T}}}

\newcommand{\im}{\op{im}}

\def \dif {\mathrm{d}}
\def \diag {\mathrm{diag}}

\def \im {\mathrm{Im}}

\def\complex{\mathbb{C}}
\def\real{\mathbb{R}}

\def\I{\mathbb{1}}

\newenvironment{mylist}[1]{\begin{list}{}{
    \setlength{\leftmargin}{#1}
    \setlength{\rightmargin}{0mm}
    \setlength{\labelsep}{2mm}
    \setlength{\labelwidth}{8mm}
    \setlength{\itemsep}{0mm}}}
    {\end{list}}


\def\ot{\otimes}

\newcommand{\out}[2]{| #1\rangle\langle #2 |}

\newcommand{\Innerm}[3]{\left\langle #1 \left| #2 \right| #3 \right\rangle}



\newcommand{\Pa}[1]{\left(#1\right)}

\newcommand{\Br}[1]{\left[#1\right]}

\newcommand{\Set}[1]{\left\{#1\right\}}



\newcommand{\ket}[1]{|#1\rangle}





\DeclareMathOperator{\trace}{Tr}

\newcommand{\Ptr}[2]{\trace_{#1}\Pa{#2}}

\newcommand{\Tr}[1]{\Ptr{}{#1}}












\def\cH{\mathcal{H}}\def\cI{\mathcal{I}}

\def\cU{\mathcal{U}}


\def\bC{\mathbf{C}}
\def\bG{\mathbf{G}}

\def\bP{\mathbf{P}}\def\bQ{\mathbf{Q}}



\def\bsC{\boldsymbol{C}}
\def\bsH{\boldsymbol{H}}

\def\bsQ{\boldsymbol{Q}}\def\bsS{\boldsymbol{S}}


\def\bsb{\boldsymbol{b}}



\def\rC{\mathrm{C}}
\def\rG{\mathrm{G}}\def\rH{\mathrm{H}}

\def\rP{\mathrm{P}}





\newtheorem{thrm}{Theorem}[section]
\newtheorem{lem}[thrm]{Lemma}

\newtheorem{cor}[thrm]{Corollary}
\theoremstyle{definition}

\newtheorem{exam}[thrm]{Example}

\numberwithin{equation}{section}


\newcounter{questionnumber}

\begin{document}

\title{Coherence generating power of unitary transformations \\ via probabilistic average}

\author{Lin Zhang$^1$\footnote{E-mail: godyalin@163.com; linyz@hdu.edu.cn},\quad Zhihao Ma$^2$,\quad Zhihua Chen$^3$,\quad Shao-Ming Fei$^4$\footnote{feishm@cnu.edu.cn}\\
  {$^1$\it\small Institute of Mathematics, Hangzhou Dianzi University, Hangzhou 310018, PR~China}\\
  {$^2$\it\small Department of Mathematics, Shanghai Jiaotong University, Shanghai 200240,
  PR~China}\\
  {$^3$\it\small Department of Applied Physics, Zhejiang University of Technology, Hangzhou, Zhejiang 310023, PR~China}\\
  {$^4$\it\small School of Mathematical Sciences, Capital Normal University, Beijing 100048, PR~China}}

\date{}
\maketitle
\begin{abstract}

We study the ability of a quantum channel to generate quantum
coherence when it applies to incoherent states. Based on
probabilistic averages, we define a measure of such coherence
generating power (CGP) for a generic quantum channel, based on the
average coherence generated by the quantum channel acting on a
uniform ensemble of incoherent states. Explicit analytical formula
of the CGP for any unitary channels are presented in terms of
subentropy. An upper bound for CGP of unital quantum channels has
been also derived. Detailed examples are investigated.

\end{abstract}

\newpage

\section{Introduction}

Originating from the fundamental superposition principle of quantum
mechanics, quantum coherence is a kind of important quantum
resources. It plays key roles in the interference of light, the
laser, superconductivity and quantum thermodynamics
\cite{Mandel,London,Horodecki}, as well as in some quantum
information tasks \cite{Bagan,Jha,Kammerlander,Shi} and biological
processes \cite{Lloyd,Li,Huelga,Singh0}.
However, the rigorous theories of quantum coherence have been
proposed only recently \cite{Baumgratz}. While the rigorous
characterization of the superposition in terms of resource theory
appeared even late \cite{Theurer2017}, although the idea of
measuring the degree of superposition in quantum states had been introduced
early in \cite{Aberg2006}.

The coherence measures are provided to quantify
the amount of quantum coherence for a given quantum system.
{After the work of Baumgratz \emph{et al.} \cite{Baumgratz},
various aspects of coherence have been studied in the literature.}
Recently, many different kinds of coherence measures such as
coherence of formation, relative entropy of coherence, $l_1$ norm of
coherence, distillable coherence, robustness of coherence, coherence
averaged over all basis sets or the Haar distributed pure states,
and max-relative entropy of coherence have been investigated
\cite{Baumgratz,Winter,Cheng,ROC,Singh,aop2017,bukf}. {The
notion of speakable and unspeakable coherence is discussed in
\cite{Marvian2016}.}

Based on these measures of coherence, the connections of coherence
with path distinguishability and asymmetry have been studied
\cite{Piani,Marvian}. For bipartite and multipartite systems, the
relationship between quantum coherence and other quantum
correlations such as quantum entanglement and quantum discord has
also been studied
\cite{Streltsov,Radhakrishnan,Ma,Karpat,Malvezzi,Chitambar}. It has
been shown that there is a one to one mapping between the quantum
entanglement and quantum coherence \cite{Zhu}.

{Apart from the above investigations, Mani and
Karimipour \cite{Mani} first introduced the concept of
cohering power and de-cohering power of generic quantum channels. They
defined the coherence generating power (CGP) of a
quantum channel to quantify the power of a channel in
generating quantum coherence by
optimizing the output coherence. And several examples of
qubit channels including unitary gates are presented}. Different
kinds of operations which can either preserve or generate coherence
have been also studied \cite{Misra,MGD}. Probabilistic averages were
firstly used to study the CGP by Zanardi \emph{et al.}
\cite{Zanardi1,Zanardi2}. They presented a way to quantify the CGP
of a unitary gates, by introducing a measure based on the average
coherence generated by the channel acting on a uniform ensemble of
incoherent states. In deriving explicit analytical formulae of CGP
for any dimensional systems, they used the Hilbert-Schmidt norm as
a measure of coherence.

However, the Hilbert-Schmidt norm measure is not a bona fide
measure of coherence. It does not have the
desired monotonicity property in general, although it facilitates
the calculation of CGP. In the present paper we use the relative entropy coherence measure,
which is a well defined measure of coherence and satisfies all the
required properties of a bona fide measure of coherence, together
with informationally operational implications. We use the relative
entropy of coherence to quantify the CGP of a generic quantum
channel via probabilistic averages. We give an explicit analytical
formula of CGP for any unitary channels. An upper bound for CGP of a
unital quantum channel is also derived.

\section{CGP of quantum channels}

The measure of coherence under consideration in the present paper is
the relative entropy of coherence \cite{Baumgratz}:
\begin{eqnarray}
\bsC_r(\rho)=\bsS(\rho_\diag)-\bsS(\rho),
\end{eqnarray}
where $\bsS(\rho)=-\Tr{\rho\ln\rho}$ is the von Neumann entropy of a
quantum state $\rho$ and $\rho_\diag$ is the diagonal part of $\rho$
with respect to the standard basis. Through out the paper, we take
$\Set{\ket{i}:i=1,\ldots,N}$ the standard computational basis in an
$N$-dimensional Hilbert space $\cH_N$. Denote $\cI$ the set of
incoherent states with respect to the basis. An incoherent state
$\Lambda$ in $\cI$ has the form
$\Lambda=\diag(\lambda_1,\ldots,\lambda_N)$, where
$\lambda=(\lambda_1,\ldots,\lambda_N)$ constitutes an
$N$-dimensional probability vector with $\sum_{i=1}^N \lambda_i=1$.
Obviously $\bsC_r(\Lambda)=0$. The problem one may ask is that if
$\Lambda$ undergoes a generic quantum channel $\Phi$, i.e., a
trace-preserving completely positive and linear map, what the
coherence of $\Phi(\Lambda)$ will be.

To characterize the coherence generating power of a generic quantum
channel $\Phi$, one needs to average over all the incoherent states
$\Lambda$. Nevertheless, the definition of CGP of a quantum channel
is not unique. All current approaches provided involve optimizations
problems that are extremely hard to deal with for generic channels.
By adopting the probabilistic averages \cite{Zanardi1,Zanardi2}, we
define the coherence generating power $\bC\bG\bP(\Phi)$ of $\Phi$ to
be
\begin{eqnarray}\label{CGB}\notag
\bC\bG\bP(\Phi)&:
=&\int_\cI\dif\mu(\Lambda)\bsC_r(\Phi(\Lambda))\\
&=&\int_\cI\dif\mu(\Lambda)\Br{\bsS(\Phi(\Lambda)_\diag) -
\bsS(\Phi(\Lambda))},
\end{eqnarray}
where $\dif\mu(\lambda) =
\Gamma(N)\delta\Pa{1-\sum^N_{j=1}\lambda_j}\prod^N_{j=1}\dif\lambda_j$,
i.e., $\mu$ is the probability measure on a uniform ensemble of
incoherent states.

We first calculate the $\bC\bG\bP(\Phi)$ for unitary channels
$\Phi=\mathrm{Ad}_U$ such that $\Phi(\Lambda)=U\Lambda{U^\dag}$,
where $U$ denotes unitary transformations and $\dag$ the transpose
and conjugation. Before giving the main results, we introduce some
basic notations. Let $p=[p_1,\ldots,p_N]^\t$ and
$q=[q_1,\ldots,q_N]^\t$ be two probability vectors in $\real^N$,
where $^\t$ denotes the transpose. The \emph{Shannon entropy} of $p$
and the \emph{relative entropy} of $p$ and $q$ are defined by
$\bsH(p)=-\sum^N_{i=1}p_i\ln p_i$ and
$\bsH(p||q)=\sum^N_{i=1}p_i(\ln p_i - \ln q_i)$, respectively, where
$0\ln0=0$.

An $N\times N$ matrix $B=[b_{ij}]$ is said to be \emph{stochastic}
if $b_{ij}\geqslant0$, and $\sum^N_{i=1}b_{ij}=1$ for every
$j=1,\ldots,N$. If $\sum^N_{j=1}b_{ij}=1$ holds also for every
$i=1,\ldots,N$, then a stochastic $B$ is said to be
\emph{bi-stochastic}. Let $B$ be a bi-stochastic $N\times N$ matrix
and $p$ an $N$-dimensional probability vector. The \emph{weighted
entropy of $B$ with respect to $p$} is defined by $\bsH_p(B) =
\sum^N_{j=1}p_j\bsH(\beta_j)$, where $B=[\beta_1,\ldots,\beta_N]$ is
the column-block partition of $B$. In particular, when
$p=[1/N,\ldots,1/N]^\t$, one denotes
\begin{eqnarray}
\bsH(B)=\frac1N\sum^N_{j=1}\bsH(\beta_j).
\end{eqnarray}
It can be  proved that
$\bsH_p(B)\leqslant \bsH(Bp)\leqslant \bsH_p(B)+\bsH(p)$.

Let $\Phi$ be a quantum channel and $\Phi=\sum_\mu
\mathrm{Ad}_{M_\mu}$ be its Kraus representation. Define the
\emph{Kraus matrix} $B(\Phi)$ of $\Phi$ by $B(\Phi)=\sum_\mu
M_\mu\star \overline{M_\mu}$, where $\star$ denotes the Schur
product of matrices, that is, the entrywise product of two matrices,
and $\overline{M_\mu}$ is the complex conjugate of $M_\mu$. It is
easy to show that $B(\Phi)$ is a stochastic matrix if $\Phi$ is a
quantum channel on $\cH$, and $B(\Phi)$ is a bi-stochastic matrix if
$\Phi$ is a unital quantum channel ($\Phi$ being unital here means
that $\Phi(\I)=\I$). Moreover, $B(\Phi^\dagger)=B(\Phi)^\t$
\cite{LJ2011}. In this case, one also has $p=B(\Phi)\lambda$, where
$p=[p_1,\ldots,p_N]^\t$ with $p_j=\Innerm{j}{\Phi(\rho)}{j}$,
$j=1,\ldots,N$, and $\lambda=[\lambda_1,\ldots,\lambda_N]^\t$ with
$\lambda_i$ giving by the spectral decomposition
$\rho=\sum^N_{j=1}\lambda_j\out{j}{j}$ of a quantum state $\rho$.

If $B=[b_{ij}]$ is a $N\times N$ bi-stochastic matrix and
$\lambda=[\lambda_1,\ldots,\lambda_N]^\t$ a probability vector, then
$B\lambda$ is also a probability vector. Its Shannon entropy is
given by $\bsH(B\lambda)$. It is well-known that the action of
bi-stochastic $B$ on probability vectors increases the uncertainty,
i.e. $\bsH(B\lambda) \geqslant \bsH(\lambda)$ --- a fact for the
first step in proving the famous $H$-theorem \cite{Lasota1994}. With
respect to a random probability vector $\lambda$ subjecting to a
uniform distribution over the probability simplex
$\Delta_{N-1}=\Set{[x_1,\ldots,x_N]\in\real^N_+:
\sum^N_{j=1}x_j=1}$, the corresponding probability measure
$\dif\mu(\lambda)$ is given by the one in (\ref{CGB}). Moreover, the
\emph{subentropy} associated with $\lambda$ is defined by
\begin{eqnarray}\label{se}
\bsQ(\lambda) = -\sum^N_{i=1} \frac{\lambda^N_i\ln
\lambda_i}{\prod_{j\neq i}(\lambda_i-\lambda_j)},
\end{eqnarray}
which takes its maximal value $\bsQ(\I_N/N)=\ln N- H_N + 1$ for the
completely mixed states, where $H_N=\sum^N_{j=1}1/j$ is the $N$-th
harmonic number \cite{Page1993,lin2017}.

Similarly, we can define weighted subentropy of a stochastic matrix
$B$ with respect to a probability vector $p$, $\bsQ_p(B) =
\sum^N_{j=1}p_j\bsQ(\beta_j)$, where $B=[\beta_1,\ldots,\beta_N]$ is
the column-block partition of $B$. In particular, when
$p=[1/N,\ldots,1/N]^\t$, we denote
\begin{eqnarray}
\bsQ(B)=\frac1N\sum^N_{j=1}\bsQ(\beta_j).
\end{eqnarray}
The explicit formula of CGP for the unitary channels can be given by
the subentropy.

\section{CGP of unitary and unital channels}

Based on the definition of CGP of a quantum channel, we may derive
an explicit analytical formula of the CGP for any unitary channels.

\begin{thrm}\label{th:Aone}
For any given $N\times N$ unitary matrix $U$, the $\bC\bG\bP$ of the
unitary channel $\mathrm{Ad}_U$ is given by
\begin{eqnarray}
\bC\bG\bP(U) = \bsQ(B(U)^\t),
\end{eqnarray}
where $B(U):=B(\mathrm{Ad}_U)= U\star \overline{U}$.
\end{thrm}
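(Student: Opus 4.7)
The plan is to split $\bC\bG\bP(U)$ into two averaged Shannon entropies over the simplex and evaluate each by an analytic-continuation (moment) computation. Unitary invariance of the von Neumann entropy gives $\bsS(U\Lambda U^\dagger)=\bsS(\Lambda)=\bsH(\lambda)$, while a direct calculation shows that the $(j,j)$-entry of $U\Lambda U^\dagger$ equals $\sum_i |U_{ji}|^2\lambda_i=(B(U)\lambda)_j$, hence $\bsS(\Phi(\Lambda)_\diag)=\bsH(B(U)\lambda)$. Therefore
\begin{equation*}
\bC\bG\bP(U) = \int \dif\mu(\lambda)\,\bsH(B(U)\lambda) - \int \dif\mu(\lambda)\,\bsH(\lambda),
\end{equation*}
reducing the theorem to the evaluation of these two averages.

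The heart of the argument is the following lemma: for any probability vector $c=(c_1,\dots,c_N)$ with distinct positive entries,
\begin{equation*}
-\int \dif\mu(\lambda)\,(c\cdot\lambda)\ln(c\cdot\lambda) = \frac{H_N - 1 + \bsQ(c)}{N}.
\end{equation*}
To prove it I would compute $\mathbb{E}_\mu[(c\cdot\lambda)^\alpha]$ in closed form and differentiate at $\alpha=1$. Using the representation $\lambda_i=X_i/\sum_j X_j$ with $X_i$ i.i.d.\ $\mathrm{Exp}(1)$ (so that $\lambda$ is independent of $S=\sum_j X_j$) together with the hypoexponential density of $c\cdot X$ (whose rates $1/c_i$ are distinct), one obtains
\begin{equation*}
\mathbb{E}_\mu[(c\cdot\lambda)^\alpha] = \frac{\Gamma(\alpha+1)\Gamma(N)}{\Gamma(\alpha+N)}\sum_{i=1}^N \frac{c_i^{\alpha+N-1}}{\prod_{j\neq i}(c_i-c_j)}.
\end{equation*}
Differentiating at $\alpha=1$ and using $\psi(2)-\psi(N+1)=1-H_N$, the factor $\partial_\alpha c_i^{\alpha+N-1}=c_i^{\alpha+N-1}\ln c_i$ yields the $-\bsQ(c)/N$ contribution and the digamma piece yields $(H_N-1)/N$, which is the lemma.

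Assembly is then immediate. Since $B(U)=U\star\overline{U}$ is bi-stochastic, each row $b^{(i)}$ is a probability vector; applying the lemma row by row and summing gives
\begin{equation*}
\int \dif\mu(\lambda)\,\bsH(B(U)\lambda) = (H_N-1) + \frac{1}{N}\sum_{i=1}^N \bsQ(b^{(i)}) = (H_N-1) + \bsQ(B(U)^\t),
\end{equation*}
the last equality because the rows of $B(U)$ are the columns of $B(U)^\t$. The second integral $\int \dif\mu(\lambda)\,\bsH(\lambda)=H_N-1$ follows from the $\mathrm{Beta}(1,N-1)$ marginal of $\lambda_i$ (or from the degenerate limit $c\to \mathbf{1}/N$ in the lemma, where indeed $\bsQ(\mathbf{1}/N)=\ln N-H_N+1$). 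Subtracting the two averages gives $\bC\bG\bP(U)=\bsQ(B(U)^\t)$.

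The main obstacle I anticipate is the moment evaluation and its differentiation at $\alpha=1$ in closed form, and then the continuity argument needed to extend the lemma to unitaries whose rows $|U_{ji}|^2$ contain repeated entries. Both sides of the lemma extend continuously across such confluent loci (using L'H\^opital's rule on the subentropy, which is the standard way to interpret $\bsQ$ when eigenvalues coincide), so a limiting argument completes the proof in full generality.
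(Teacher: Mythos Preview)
Your proposal is correct and follows the same global architecture as the paper: reduce $\bC\bG\bP(U)$ to $\int[\bsH(B\lambda)-\bsH(\lambda)]\dif\mu(\lambda)$ with $B=B(U)$, write each summand of $\bsH(B\lambda)$ as the $\alpha$-derivative at $\alpha=1$ of the moment $\int(c\cdot\lambda)^\alpha\dif\mu(\lambda)$, evaluate that moment in closed form as $\tfrac{\Gamma(\alpha+1)\Gamma(N)}{\Gamma(\alpha+N)}\sum_j c_j^{\alpha+N-1}/\prod_{i\neq j}(c_j-c_i)$, and differentiate using $\psi(2)-\psi(N+1)=1-H_N$.

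The genuine difference lies in how the moment is obtained. The paper (Appendix~A) uses the Fourier representation of the Dirac delta and the Gamma-integral representation of the power, then a contour integral closed via the Jordan lemma. Your route---the Dirichlet realization $\lambda=X/S$ with $X_i$ i.i.d.\ $\mathrm{Exp}(1)$, the independence of $\lambda$ and $S$ giving $\mathbb{E}[(c\cdot\lambda)^\alpha]=\mathbb{E}[(c\cdot X)^\alpha]/\mathbb{E}[S^\alpha]$, and the hypoexponential density of $c\cdot X$---is more elementary and avoids complex analysis entirely. You also make the continuity argument across confluent $c_i$ explicit, which the paper leaves implicit. One small slip: your parenthetical ``degenerate limit $c\to\mathbf{1}/N$'' does not yield $\int\bsH(\lambda)\dif\mu=H_N-1$; that limit is only a consistency check of the lemma. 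The correct degeneration is $c\to e_i$ (where $\bsQ(e_i)=0$, giving $(H_N-1)/N$ per coordinate), which matches your $\mathrm{Beta}(1,N-1)$ computation and the paper's ``take $B=\I$'' step.
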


Before proving the theorem, we first give the following Lemma.

\begin{lem}
Let $B$ be an $N\times N$ bi-stochastic matrix. Then
$\int\bsH(B\lambda)\dif\mu(\lambda) = H_N-1+\bsQ(B^\t)$.
Furthermore,
\begin{eqnarray}\label{lemma}
\int[\bsH(B\lambda)-\bsH(\lambda)]\dif\mu(\lambda) = \bsQ(B^\t).
\end{eqnarray}
\end{lem}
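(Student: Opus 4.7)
The plan is to reduce $\int \bsH(B\lambda)\,\dif\mu(\lambda)$ to $N$ copies of a single-row ``subentropy integral'' and then assemble. Since $B$ is bi-stochastic, each row $r_i := (b_{i1},\ldots,b_{iN})^\t$ lies in the simplex $\Delta_{N-1}$, and $(B\lambda)_i = r_i\cdot\lambda$. Hence
\begin{equation*}
\bsH(B\lambda) \;=\; -\sum_{i=1}^N (r_i\cdot\lambda)\ln(r_i\cdot\lambda),
\end{equation*}
and the problem reduces to establishing, for every probability vector $q\in\Delta_{N-1}$, the identity
\begin{equation*}
I(q) \;:=\; -\int_{\Delta_{N-1}} (q\cdot\lambda)\ln(q\cdot\lambda)\,\dif\mu(\lambda) \;=\; \frac{\bsQ(q) + H_N - 1}{N}.
\end{equation*}
Summing this single-row formula over $i$ and invoking the paper's definition $\bsQ(B^\t) = \frac{1}{N}\sum_i\bsQ(r_i)$ (since the columns of $B^\t$ are precisely the rows $r_i$ of $B$, which are themselves probability vectors) will deliver the first equality.

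The identity for $I(q)$ is the crux of the argument and the main obstacle. I would prove it by the moment trick: since $-y\ln y = -\partial_n y^n|_{n=1}$, we have $I(q) = -\partial_n \bbE[(q\cdot\lambda)^n]\big|_{n=1}$. The Dirichlet$(1,\ldots,1)$ moments on the simplex evaluate as
\begin{equation*}
\bbE\br{(q\cdot\lambda)^n} \;=\; \frac{n!\,(N-1)!}{(N+n-1)!}\,h_n(q),
\end{equation*}
where $h_n$ is the complete homogeneous symmetric polynomial of degree $n$; partial fractions on the generating function $\prod_i (1-q_i t)^{-1}$ yield $h_n(q) = \sum_i q_i^{N+n-1}/\prod_{j\neq i}(q_i - q_j)$ for distinct $q_i$, extended by continuity in general. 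Now $h_1(q) = \sum_i q_i = 1$ since $q$ is a probability vector, and differentiating the partial-fraction formula termwise gives $\partial_n h_n(q)|_{n=1} = \sum_i q_i^N \ln q_i/\prod_{j\neq i}(q_i - q_j) = -\bsQ(q)$, directly from the definition (\ref{se}). The prefactor contributes the logarithmic derivative $\partial_n \ln\br{n!(N-1)!/(N+n-1)!}\big|_{n=1} = -\sum_{k=2}^N \tfrac1k = -(H_N - 1)$. Combining these via the product rule at $n=1$ gives $I(q) = (\bsQ(q)+H_N-1)/N$; the ``hard part'' is really just organizing this moment-and-partial-fractions computation, but the desired pole structure is already built into the definition of subentropy, so the pieces line up automatically.

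With $I(q)$ in hand, summation over $i$ yields
\begin{equation*}
\int \bsH(B\lambda)\,\dif\mu(\lambda) \;=\; \sum_{i=1}^N \frac{\bsQ(r_i)+H_N-1}{N} \;=\; \bsQ(B^\t) + (H_N - 1),
\end{equation*}
which is the first assertion. For the second equation (\ref{lemma}) I would specialize to $B = \I_N$: each column of $\I_N$ is a standard basis vector, on which the subentropy vanishes, so $\bsQ(\I_N^\t) = 0$ and the first identity reduces to the baseline $\int\bsH(\lambda)\,\dif\mu(\lambda) = H_N - 1$. Subtracting this baseline from the general identity then cancels $H_N - 1$ and leaves exactly $\bsQ(B^\t)$, completing the proof of (\ref{lemma}).
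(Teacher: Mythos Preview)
Your proof is correct and shares the paper's overall architecture: decompose $\bsH(B\lambda)$ into row contributions, reduce to the single-row moment $\bbE[(q\cdot\lambda)^\alpha]$, differentiate in $\alpha$ at $\alpha=1$, recognize the subentropy, sum, and subtract the $B=\I$ baseline. Where you diverge is in the evaluation of the moment itself. The paper (Appendix~A) obtains $\cI_p(\alpha)=\frac{\Gamma(N)\Gamma(\alpha+1)}{\Gamma(\alpha+N)}\sum_j p_j^{\alpha+N-1}/\prod_{i\neq j}(p_j-p_i)$ via the Fourier representation of the Dirac delta, the Gamma-integral representation of $x^\alpha$, and a residue computation using Jordan's lemma. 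You instead expand $(q\cdot\lambda)^n$ multinomially, use the closed-form Dirichlet$(1,\ldots,1)$ mixed moments to collapse the sum to $\frac{n!(N-1)!}{(N+n-1)!}\,h_n(q)$, and then read off the same pole structure from the partial-fraction identity $h_n(q)=\sum_i q_i^{N+n-1}/\prod_{j\neq i}(q_i-q_j)$. Your route is more elementary and entirely real-variable; the paper's buys a formula valid for continuous $\alpha$ from the outset, whereas your multinomial derivation strictly speaking gives the identity only for nonnegative integer $n$ and then relies on an (easy, but unstated) analytic-continuation step to justify differentiating in $n$ at $n=1$. Either way the same closed form emerges, and the differentiation step---with $h_1(q)=1$, $\partial_n h_n|_{n=1}=-\bsQ(q)$, and $\psi(2)-\psi(N+1)=1-H_N$---is identical in both proofs.
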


\begin{proof}
We calculate the following integrals related to the left hand side
of (\ref{lemma}):
\begin{eqnarray*}
\cI_B=\int\bsH(B\lambda)\dif\mu(\lambda)\quad\text{and}\quad
\cI_{\I}=\int\bsH(\lambda)\dif\mu(\lambda).
\end{eqnarray*}
Concerning $\cI_B$, we have
\begin{eqnarray}
\begin{aligned}
\cI_B=\int\bsH(B\lambda)\dif\mu(\lambda)=
-\sum^N_{i=1}\int\Pa{\sum^N_{j=1}b_{ij}\lambda_j}\ln\Pa{\sum^N_{j=1}b_{ij}\lambda_j}\dif\mu(\lambda).\nonumber
\end{aligned}
\end{eqnarray}
It suffices to calculate
\begin{eqnarray*}
\Gamma(N)\int\Pa{\sum^N_{j=1}p_j\lambda_j}\ln\Pa{\sum^N_{j=1}p_j\lambda_j}\delta\Pa{1-\sum^N_{j=1}\lambda_j}\prod^N_{k=1}\dif\lambda_k
=\cI'_p(1),
\end{eqnarray*}
where
\begin{equation}\label{ipa}
\cI_p(\alpha)=\Gamma(N)\int
\Pa{\sum^N_{j=1}p_j\lambda_j}^\alpha\delta\Pa{1-\sum^N_{j=1}\lambda_j}\prod^N_{k=1}\dif\lambda_k
\end{equation}
and $\cI'_p(1)
=\left.\frac{d\cI_p(\alpha)}{d\alpha}\right|_{\alpha=1}$. After some
tedious calculation , we have (see Eq.~\eqref{eq:alpha-int} in
Appendix A),
\begin{eqnarray}
\cI_p(\alpha)=\frac{\Gamma(N)\Gamma(\alpha+1)}{\Gamma(\alpha+N)}\sum^N_{j=1}\frac{p^{\alpha+N-1}_j}{\prod_{i\neq
j}(p_j-p_i)}\nonumber
\end{eqnarray}
and (see Eq.~\eqref{eq:prime-at-one} in Appendix A)
\begin{eqnarray}
\cI'_p(1) = -\frac1N\Pa{H_N-1+\bsQ(p)}.
\end{eqnarray}
By partitioning $B$ as a row-block matrix:
$$
B=\Br{\begin{array}{c}
                                                 \bsb_1 \\
                                                 \vdots \\
                                                 \bsb_N
                                               \end{array}
},
$$
where $\bsb_i=[b_{i1},\ldots,b_{iN}]$ for $i=1,\ldots,N$, we obtain
\begin{eqnarray}\label{8}
\cI_B = -\sum^N_{i=1}\cI'_{\bsb^\t_i}(1) =H_N-1+\frac1N\sum^N_{i=1}
\bsQ(\bsb^\t_i).
\end{eqnarray}
Taking $B=\I$, we have $\cI_{\I}=H_N-1$, which gives rise to
(\ref{lemma}).
\end{proof}

\textbf{Remark} It can shown that $\bsQ(B^\t) \leqslant \bsH(B)$,
see Appendix B. Hence (\ref{lemma}) also implies that
$\int[\bsH(B\lambda)-\bsH(\lambda)]\dif\mu(\lambda) \leqslant
\bsH(B)$.

\begin{proof}[Proof of Theorem~\ref{th:Aone}]
Let $\Lambda=\diag(\lambda_1,\ldots,\lambda_N)$ be an incoherent
state in $\cI$, and $\Phi=\mathrm{Ad}_U$ be a unitary channel.
Denote $\lambda=(\lambda_1,\ldots,\lambda_N)$ the probability vector
form of $\Lambda$. Then
$$
\bsS((U\Lambda U^\dagger)_\diag) =
\bsH(B(\Phi)\lambda)\quad\text{and}\quad \bsS(U\Lambda U^\dagger) =
\bsH(\lambda).
$$
Thus $\bsS((U\Lambda U^\dagger)_\diag) - \bsS(U\Lambda U^\dagger) =
\bsH(B(\Phi)\lambda) - \rH(\lambda)$. Therefore
\begin{eqnarray*}
\Gamma(N)\int[\dif\Lambda]
(1-\Tr{\Lambda})\Pa{\bsS(\Phi(\Lambda)_\diag) - \bsS(\Lambda)}
=\int\dif\mu(\lambda)\Pa{\bsH(B(\Phi)\lambda) - \bsH(\lambda)}.
\end{eqnarray*}
That is,
\begin{eqnarray*}
\bC\bG\bP(\Phi) &=& \int\dif\mu(\lambda)\Pa{\bsH(B(\Phi)\lambda)
-\bsH(\lambda)}\\
&=&\frac1N\sum^N_{i=1}\bsQ(\bsb^\t_i(\Phi))=\bsQ(B(U)^\t).
\end{eqnarray*}
We have done.
\end{proof}

From the Theorem we see that the possible values of CGP form the
closed interval $[0,\ln N-H_N+1]$. An interesting question is which
kind of unitary channels would give rise to the maximal value of
CGP. Let us consider the set of $U$ such that
\begin{eqnarray}
\Set{U:\bC\bG\bP(U)=\ln N-H_N+1} = \Set{U: B(U)=\frac1N P},
\end{eqnarray}
where $P$ is the matrix with all entries being one. Obviously $U$
must be of the following form: $U =\frac1{\sqrt{N}}Z$, where
$Z=[z_{ij}]$ with the complex entries $z_{ij}$ satisfying
$\abs{z_{ij}}=1$. For example, for $N=2$, we have
\begin{eqnarray}
U=\frac1{\sqrt{2}}e^{\mathrm{i}\phi}\Br{\begin{array}{cc}
                                          e^{\mathrm{i}\theta} & -e^{-\mathrm{i}\gamma} \\
                                          e^{\mathrm{i}\gamma} & e^{-\mathrm{i}\theta}
                                        \end{array}
}.
\end{eqnarray}

If $\Phi$ is a unital quantum channel, one has
\begin{eqnarray}
\bsS(\Phi(\rho)) - \bsS(\rho)\geqslant
\bsS(\rho||\Phi^*\circ\Phi(\rho)),
\end{eqnarray}
where
$\bsS(\rho||\sigma):=-\Tr{\rho(\ln\rho-\ln\sigma)}$ is the relative
entropy, and $\Phi^*$ is the dual of $\Phi$ in the sense that
$\Tr{X\Phi^*(Y)} = \Tr{\Phi(X)Y}$ for any $N\times N$ matrices $X$
and $Y$ \cite{buscemi2016pra}. In this case we have

\begin{cor}
If $\Phi$ is a unital quantum channel, then
\begin{eqnarray}
\bC\bG\bP(\Phi) \leqslant \bsQ(B(\Phi)^\t),
\end{eqnarray}
where $B(\Phi)$ is the Kraus matrix of $\Phi$.
\end{cor}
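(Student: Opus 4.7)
The plan is to obtain the corollary as a direct consequence of the Lemma, by turning the equality in Theorem \ref{th:Aone} into an inequality via the monotonicity of the von Neumann entropy under unital channels. The key observation is that Theorem \ref{th:Aone} actually used two ingredients: that $\bsS(\Phi(\Lambda)_\diag)$ equals $\bsH(B(\Phi)\lambda)$ for any quantum channel $\Phi$ and any incoherent $\Lambda$, and that $\bsS(\Phi(\Lambda))=\bsH(\lambda)$ when $\Phi$ is unitary. The first ingredient carries over verbatim, while the second must be replaced by an inequality.

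First I would verify the formula $\bsS(\Phi(\Lambda)_\diag)=\bsH(B(\Phi)\lambda)$ in the general case. Writing $\Phi=\sum_\mu\Ad_{M_\mu}$ and $\Lambda=\diag(\lambda_1,\dots,\lambda_N)$, a direct expansion yields $\langle i|\Phi(\Lambda)|i\rangle=\sum_{\mu,j}|(M_\mu)_{ij}|^2\lambda_j=\sum_j B(\Phi)_{ij}\lambda_j$, so the diagonal of $\Phi(\Lambda)$ is exactly the probability vector $B(\Phi)\lambda$, and taking Shannon entropy gives the claim. Note that this step uses only that $\Phi$ is a channel.

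Next I would use the fact that for a unital quantum channel $\Phi$, the von Neumann entropy is non-decreasing, i.e.\ $\bsS(\Phi(\rho))\geqslant\bsS(\rho)$; this is the special case of the inequality $\bsS(\Phi(\rho))-\bsS(\rho)\geqslant\bsS(\rho\|\Phi^*\circ\Phi(\rho))\geqslant 0$ quoted just before the corollary. Applying this with $\rho=\Lambda$ and recalling $\bsS(\Lambda)=\bsH(\lambda)$, I obtain the pointwise bound
\begin{eqnarray*}
\bsC_r(\Phi(\Lambda))=\bsS(\Phi(\Lambda)_\diag)-\bsS(\Phi(\Lambda))\leqslant \bsH(B(\Phi)\lambda)-\bsH(\lambda).
\end{eqnarray*}

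Finally I would integrate this inequality against $\dif\mu(\lambda)$ over the probability simplex. Since $\Phi$ is unital, $B(\Phi)$ is bi-stochastic, so the Lemma applies and the right-hand side integrates to $\bsQ(B(\Phi)^\t)$, giving $\bC\bG\bP(\Phi)\leqslant\bsQ(B(\Phi)^\t)$ as required. There is no real obstacle here; the only thing to check carefully is that $B(\Phi)$ is bi-stochastic (which was already recorded in the paper) so that the Lemma is legitimately invoked, and that the identity $\bsS(\Phi(\Lambda)_\diag)=\bsH(B(\Phi)\lambda)$ holds for general channels rather than just unitary ones.
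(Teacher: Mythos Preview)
Your proof is correct and follows exactly the route the paper intends: the corollary is stated immediately after the inequality $\bsS(\Phi(\rho))-\bsS(\rho)\geqslant\bsS(\rho\|\Phi^*\circ\Phi(\rho))$, and the implicit argument is precisely to replace the equality $\bsS(U\Lambda U^\dagger)=\bsH(\lambda)$ in the proof of Theorem~\ref{th:Aone} by the inequality $\bsS(\Phi(\Lambda))\geqslant\bsH(\lambda)$, then integrate and invoke the Lemma (using that $B(\Phi)$ is bi-stochastic for unital $\Phi$). You have simply made this explicit.
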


\section{Examples}

In the following, as applications of our Theorem~\ref{th:Aone}, we
calculate the CGP for some detailed unitary transformations.

\begin{exam}
Consider the Hadamard gate $H=\frac1{\sqrt{2}}\Br{\begin{array}{cc}
                                         1 & 1 \\
                                         1 & -1
                                       \end{array}}$.
The Kraus matrix is given by $B(H)=\frac12\Br{\begin{array}{cc}
                                         1 & 1 \\
                                         1 & 1
                                       \end{array}}$.
Therefore, from the Theorem we have $\bC\bG\bP(H) = \ln2  - 1/2$.
\end{exam}

\begin{exam}
For $U_{\theta}=\Br{\begin{array}{cc}
                             \cos\theta & \sin\theta \\
                             -\sin\theta & \cos\theta
                             \end{array}}$,
the Kraus matrix is given by $B(U_\theta)=\Br{\begin{array}{cc}
                                        \cos^2\theta  & \sin^2\theta \\
                                         \sin^2\theta & \cos^2\theta
                                       \end{array}}$.
Its CGP is given by
\begin{eqnarray}
\rC\rG\rP(U_{\theta}) =
\frac{\sin^4\theta\ln\sin^2\theta-\cos^4\theta\ln\cos^2\theta}{\cos^2\theta-\sin^2\theta}.
\end{eqnarray}

\begin{figure}[htbp]\centering
\includegraphics[width=0.5\textwidth]{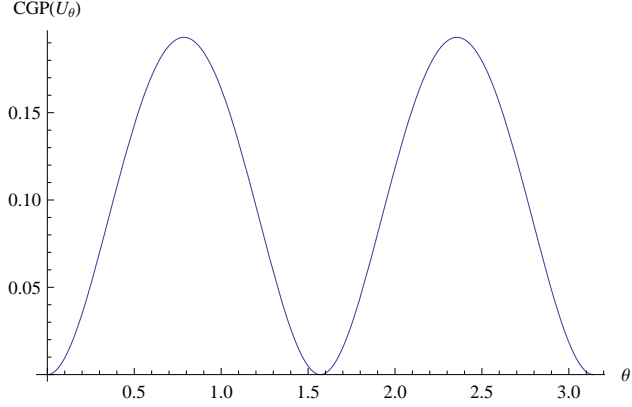}
\caption{The coherence generating power of $U_{\theta}$ vs
$\theta$}. \label{fig:Fig1}
\end{figure}
As a demonstration, we plot the $\rC\rG\rP(U_{\theta})$ as the
function of $\theta\in[0,\pi]$. From Fig.~\ref{fig:Fig1}, we see
that the coherence generating power of $U_\theta$ is a periodic
function of $\theta$. In particular, the maximal CGP for $U_\theta$
is $\ln N-H_N+1=\ln 2-1/2=0.193$. We also see that the maximal CGP
of $U_{\theta}$ is attained at $\theta=\pi/4$ and $3\pi/4$.
\end{exam}

\begin{exam} [Square root of swap gate]
The $\sqrt{\text{swap}}$ gate is universal in the sense that any
quantum multi-qubit gates can be constructed from
$\sqrt{\text{swap}}$ and single qubit gates,
$$
\sqrt{\text{swap}} = \Br{\begin{array}{cccc}
                           1 & 0 & 0 & 0 \\
                           0 & \frac12(1+\mathrm{i}) & \frac12(1-\mathrm{i}) & 0 \\
                           0 & \frac12(1-\mathrm{i}) & \frac12(1+\mathrm{i}) & 0 \\
                           0 & 0 & 0 & 1
                         \end{array}
}.
$$
By direct computation we have
$$
\bC\bG\bP(\sqrt{\text{swap}}) = \frac12\ln 2.
$$
\end{exam}

\begin{exam}
For a partial swap operator \cite{Audenaert2016}, one has $U_t\in
\cU(\complex^d\ot\complex^d)$:
$U_t=\sqrt{t}\mathbb{I}_d\ot\mathbb{I}_d+\mathrm{i}\sqrt{1-t}\,S$,
where $S=\sum^d_{i,j=1}\out{ij}{ji}$ and $t\in[0,1]$. In particular,
for $d=2$, we have
$$
U_t = \Br{\begin{array}{cccc}
            \sqrt{t}+\sqrt{1-t}\mathrm{i} & 0 & 0 & 0 \\
            0 & \sqrt{t} & \sqrt{1-t}\mathrm{i} & 0 \\
            0 & \sqrt{1-t}\mathrm{i} & \sqrt{t} & 0 \\
            0 & 0 & 0 & \sqrt{t}+\sqrt{1-t}\mathrm{i}
          \end{array}
}.
$$
Then
$$
B(U_t) = U_t\star \overline{U}_t=\Br{\begin{array}{cccc}
            1 & 0 & 0 & 0 \\
            0 & t & 1-t & 0 \\
            0 & 1-t & t & 0 \\
            0 & 0 & 0 & 1
          \end{array}
}
$$
and
$$
\bC\bG\bP(U_t) = \frac{t^2\ln t - (1-t)^2\ln(1-t)}{2(1-2t)},\quad
t\in[0,1].
$$
\begin{figure}[htbp]\centering
\includegraphics[width=0.5\textwidth]{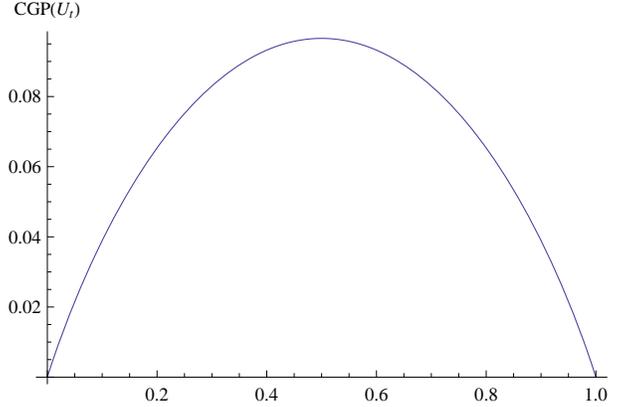}
\caption{The coherence generating power of $U_{t}$} \label{fig:Fig2}
\end{figure}
Again, we plot the $\bC\bG\bP(U_t)$ as the function of $t\in[0,1]$.
From Fig.~\ref{fig:Fig2}, we see that the maximal CGP of $U_{t}$,
attained at $t=0.5$, is given by
$\bC\bG\bP(U_{1/2})=\frac14(2\ln2-1)=0.097$, which is less than the
maximal CGP, $\ln 4-H_4+1=\ln 4-1/2-1/3-1/4=0.303$, of $4\times 4$
unitary matrices.
\end{exam}
\section{Conclusion}

Based on probabilistic averages, we have defined a measure of the
coherence generating power of a unitary operation: the average
coherence generated by the unitary channel acting on a uniform
ensemble of incoherent states. We have presented the explicit
analytical formula of CGP for any unitary channel and any finite
dimensions in terms of subentropy. An upper bound for CGP of a
unital quantum channel has been also derived. Detailed examples have
been studied.

We remark that Zanardi \emph{et al.} \cite{Zanardi1,Zanardi2}
studied the cohering and de-cohering power for unitary gates, based
on the coherence measure of Hilbert-Schmidt norm, which is not
really a well-defined measure of coherence. And their method is only
suitable for unital quantum channels since the Hilbert-Schmidt norm is
non-increasing under unital quantum channels. Hence the related
computation is relatively easy as it involves only integrals
in uniform Haar measure over pure states. In this work we
used the bona fide coherence measure of relative entropy. Our
approach applies to any quantum channels. The related computation
concerns complex integral techniques with Dirac delta function and
its Fourier integral representations. In addition, the
formula in \cite{Zanardi1,Zanardi2} for CGP of unitary channels
strongly depends on the dimension: the CGP approaches to zero when
the dimension increases. However, our CGP of any unitary channels
does not always approach to zero when the dimension goes to
infinite. It is generally very difficult to compute the CGP for
generic quantum channels. Our approach may highlight further
researches on such characterization of quantum coherence.

\subsubsection*{Acknowledgements}

This research was supported by Zhejiang Provincial Natural Science
Foundation of China under Grant No. LY17A010027 and NSFC
(Nos.11301124, 61771174), and also supported by the
cross-disciplinary innovation team building project of Hangzhou
Dianzi University. Other authors acknowledge supports from NSFC
Grant Nos. 11275131, 11571313 (ZM); No.11571313(ZC); No.
11675113(SF). Huangjun Zhu is acknowledged for helpful discussions.


\subsection*{Appendix A: About the proof of the Theorem}

We first introduce the following Lemma.

\begin{lem}[Jordan lemma]\label{lem:jordan}
Let $f(z)$ be analytic in the upper half-plane $\im(z)\geqslant0$,
except for a finite number of isolated points. Let also $C_R$ be an
arc of a semicircle $\abs{z}=R$ in the upper half-plane. If for each
$z$ on $C_R$, there is some constant $K_R$ such that
$\abs{f(z)}\leqslant K_R$ and $K_R\to0$ as $R\to\infty$, then for
$a>0$
\begin{eqnarray}
\lim_{R\to\infty} \int_{C_R} e^{\mathrm{i}az}f(z)\dif z = 0.
\end{eqnarray}
\end{lem}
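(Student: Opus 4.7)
The plan is to reduce the statement to a one-dimensional bound on $\int_0^\pi e^{-aR\sin\theta}\dif\theta$ via parametrization of the semicircle, and then use Jordan's classical inequality $\sin\theta\geqslant 2\theta/\pi$ on $[0,\pi/2]$ to control the exponential decay. The sign assumption $a>0$ enters precisely to ensure that $e^{\mathrm{i}az}$ is damped (rather than amplified) in the upper half-plane, which is the mechanism behind the whole estimate.

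First I would parametrize $C_R$ by $z=Re^{\mathrm{i}\theta}$, $\theta\in[0,\pi]$, so that $\dif z=\mathrm{i}Re^{\mathrm{i}\theta}\dif\theta$ and $\abs{e^{\mathrm{i}az}}=e^{-aR\sin\theta}$. Combined with the hypothesis $\abs{f(z)}\leqslant K_R$ on $C_R$, this yields
\begin{eqnarray*}
\Abs{\int_{C_R} e^{\mathrm{i}az}f(z)\dif z}
\leqslant K_R R\int_0^\pi e^{-aR\sin\theta}\dif\theta
= 2K_R R\int_0^{\pi/2} e^{-aR\sin\theta}\dif\theta,
\end{eqnarray*}
where the last equality uses the symmetry $\sin(\pi-\theta)=\sin\theta$.

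Second, I would apply Jordan's inequality $\sin\theta\geqslant 2\theta/\pi$ for $\theta\in[0,\pi/2]$ to replace $\sin\theta$ by a linear function in the exponent, giving
\begin{eqnarray*}
\int_0^{\pi/2} e^{-aR\sin\theta}\dif\theta
\leqslant \int_0^{\pi/2} e^{-2aR\theta/\pi}\dif\theta
= \frac{\pi}{2aR}\Pa{1-e^{-aR}} \leqslant \frac{\pi}{2aR}.
\end{eqnarray*}
Plugging this back yields
$\Abs{\int_{C_R} e^{\mathrm{i}az}f(z)\dif z}\leqslant \pi K_R/a$, and since $K_R\to 0$ as $R\to\infty$ by assumption, the integral vanishes in the limit.

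The only subtle step is justifying Jordan's inequality, but this is an elementary calculus fact (the concavity of $\sin$ on $[0,\pi/2]$ implies that its graph lies above the chord joining $(0,0)$ and $(\pi/2,1)$). The hypothesis that $f$ is analytic in the upper half-plane except at finitely many isolated points is not actually used in the estimate itself; it is present so that the integral $\int_{C_R}e^{\mathrm{i}az}f(z)\dif z$ is well-defined for all sufficiently large $R$ (once $R$ exceeds the modulus of every singularity). Thus no genuine obstacle arises, and the proof is essentially a careful chain of elementary inequalities.
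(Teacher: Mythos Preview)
Your proof is correct and follows essentially the same approach as the paper: parametrize the arc by $z=Re^{\mathrm{i}\theta}$, bound by $K_R R\int_0^\pi e^{-aR\sin\theta}\dif\theta$, and invoke $\sin\theta\geqslant 2\theta/\pi$ on $[0,\pi/2]$ to obtain the estimate $K_R\frac{\pi}{a}(1-e^{-aR})\to 0$. Your version is slightly more detailed (explicit symmetry reduction to $[0,\pi/2]$, justification of Jordan's inequality via concavity, and the remark on the role of the analyticity hypothesis), but the argument is the same.
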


\begin{proof}
Set $z=Re^{\mathrm{i}\theta}$ and take into account that
$\sin\theta\geqslant \frac2\pi\theta$ for
$0\leqslant\theta\leqslant\frac\pi2$. We have that, if $R\to\infty$,
\begin{eqnarray*}
\abs{\int_{C_R} e^{\mathrm{i}az}f(z)\dif z}&\leqslant& K_R\cdot
R\cdot\int^\pi_0 e^{-aR\sin\theta}\dif \theta\\
&\leqslant& K_R\frac\pi a\Pa{1-e^{-aR}}\to0.
\end{eqnarray*}
This completes the proof.
\end{proof}

If $a<0$ and $f(z)$ satisfies the conditions of Jordan lemma at
$\im(z)\leqslant0$, the formula is still valid but at the
integration over the arc $C_R$ in the lower half-plane. Similar
statements take place at $a=\pm\mathrm{i}\alpha(\alpha>0)$ if the
$C_R$-integration occurs in the right $(\Re(z)\geqslant0)$ or left
$(\Re(z)\leqslant0)$ half-plane, respectively.

Now we prove the following two formulae used in the proof of the
Theorem~\ref{th:Aone}.

(i).
\begin{eqnarray}\label{eq:alpha-int}
\cI_p(\alpha)=\frac{\Gamma(N)\Gamma(1-\alpha-N)}{\Gamma(-\alpha)}
\sum^N_{j=1}\frac{p^{\alpha+N-1}_j}{\prod_{i\neq j}(p_i-p_j)},
\end{eqnarray}

(ii).
\begin{eqnarray}\label{eq:prime-at-one}
\cI'_p(1) = -\frac1N\Pa{H_N-1+\bQ(p)},
\end{eqnarray}
where $\cI_p(\alpha)$ is given by (\ref{ipa}).

\begin{proof}
(i). From the Fourier transform of the Dirac delta function
$\delta$,
\begin{eqnarray}
\delta\Pa{1-\sum^N_{j=1}\lambda_j} =
\frac1{2\pi}\int^\infty_{-\infty}
\exp\Pa{\mathrm{i}t\Pa{1-\sum^N_{j=1}\lambda_j}}\dif t,
\end{eqnarray}
and the definition of Gamma function,
\begin{eqnarray}
\Pa{\sum^N_{j=1}p_j\lambda_j}^\alpha =\frac1{\Gamma(-\alpha)}
\int^\infty_0s^{-\alpha-1}\exp\Pa{-s\Pa{\sum^N_{j=1}p_j\lambda_j}}\dif
s,
\end{eqnarray}
we have
\begin{eqnarray}
\cI_p(\alpha)&=&
\Gamma(N)\int^\infty_0\cdots\int^\infty_0\Pa{\sum^N_{j=1}p_j\lambda_j}^\alpha
\delta\Pa{1-\sum^N_{j=1}\lambda_j}\prod^N_{k=1}\dif\lambda_k\\
&=&\frac{\Gamma(N)}{2\pi\Gamma(-\alpha)}\int^\infty_0\frac{\dif
s}{s^{\alpha+1}}\int^\infty_{-\infty}\dif t\cdot
e^{\mathrm{i}t}\Br{\int^\infty_0\cdots\int^\infty_0\prod^N_{k=1}\dif
\lambda_k \nabla_1\nabla_2},
\end{eqnarray}
where $\nabla_1=\exp\Pa{-s\Pa{\sum^N_{j=1}p_j\lambda_j}}$ and
$\nabla_2=\exp\Pa{-\mathrm{i}t\sum^N_{j=1}\lambda_j}$. Substituting
$f(x)=e^{-ax}H(x)$, where $H(x)=1_{(0,\infty)}$ is the Heaviside
step function and $a>0$, into the following formula,
\begin{eqnarray}
\widehat
f(\omega)=\frac1{\sqrt{2\pi}}\int^\infty_{-\infty}f(x)e^{-\mathrm{i}\omega
x}\dif x,
\end{eqnarray}
we obtain that
\begin{eqnarray}
\int^\infty_0e^{-ax}e^{-\mathrm{i}\omega x}\dif x =
\int^\infty_{-\infty}e^{-ax}H(x)e^{-\mathrm{i}\omega x}\dif x =
\frac1{\mathrm{i}\omega+a}.
\end{eqnarray}
Therefore
\begin{eqnarray}
&&\int^\infty_0\cdots\int^\infty_0\prod^N_{k=1}\dif \lambda_k
\exp\Pa{-s\Pa{\sum^N_{j=1}p_j\lambda_j}}
\exp\Pa{-\mathrm{i}t\sum^N_{j=1}\lambda_j}\notag\\
&&=\prod^N_{j=1}\int^\infty_0\dif \lambda_j
e^{-sp_j\lambda_j}e^{-\mathrm{i}t\lambda_j}
=\frac1{\prod^N_{j=1}(\mathrm{i}t+sp_j)}.
\end{eqnarray}
It follows that
\begin{eqnarray}
\cI_p(\alpha) =
\frac{\Gamma(N)}{2\pi\Gamma(-\alpha)}\int^\infty_0\frac{\dif
s}{s^{\alpha+1}}\Set{\int^\infty_{-\infty}
\frac{e^{\mathrm{i}t}}{\prod^N_{j=1}(\mathrm{i}t+sp_j)}\dif t}.
\end{eqnarray}
By using complex integral techniques in Lemma~\ref{lem:jordan}, we
get
\begin{eqnarray}
\int^\infty_{-\infty}
\frac{e^{\mathrm{i}t}}{\prod^N_{j=1}(\mathrm{i}t+sp_j)}\dif t =
\frac{2\pi}{s^{N-1}}\sum^N_{j=1}\frac{e^{-sp_j}}{\prod_{i \neq
j}(p_i-p_j)},
\end{eqnarray}
which gives rise to
\begin{eqnarray}
\cI_p(\alpha) &=&
\frac{\Gamma(N)}{\Gamma(-\alpha)}\int^\infty_0\frac{\dif
s}{s^{\alpha+N}}\sum^N_{j=1}\frac{e^{-sp_j}}{\prod_{i\neq j}(p_i-p_j)}\\
&=&\frac{\Gamma(N)}{\Gamma(-\alpha)}\sum^N_{j=1}\frac1{\prod_{i\neq
j}(p_i-p_j)}\int^\infty_0s^{-\alpha-N}e^{-sp_j}\dif
s\\
&=&\frac{\Gamma(N)\Gamma(1-\alpha-N)}{\Gamma(-\alpha)}\sum^N_{j=1}\frac{p^{\alpha+N-1}_j}{\prod_{i\neq
j}(p_i-p_j)}.
\end{eqnarray}

(ii). From the property of the Gamma function:
\begin{eqnarray}
\Gamma(1-z)\Gamma(z)=\frac{\pi}{\sin(\pi z)},
\end{eqnarray}
we have
\begin{eqnarray}
\Gamma(1-\alpha-N) =
\frac{\pi}{\Gamma(\alpha+N)\sin(\alpha\pi+N\pi)}
\end{eqnarray}
and
\begin{eqnarray}
\Gamma(-\alpha) = \frac{\pi}{\Gamma(\alpha+1)\sin(\alpha\pi+\pi)}.
\end{eqnarray}
Therefore $\cI_p(\alpha)$ can be rewritten as
\begin{eqnarray}
\cI_p(\alpha)&=&\frac{\Gamma(N)\Gamma(\alpha+1)}{\Gamma(\alpha+N)}\frac{\sin(\alpha\pi+N\pi)}{\sin(\alpha\pi+\pi)}
\sum^N_{j=1}\frac{p^{\alpha+N-1}_j}{\prod_{i\neq j}(p_i-p_j)}\\
&=&(-1)^{N-1}\frac{\Gamma(N)\Gamma(\alpha+1)}{\Gamma(\alpha+N)}\sum^N_{j=1}\frac{p^{\alpha+N-1}_j}{\prod_{i\neq
j}(p_i-p_j)},
\end{eqnarray}
which gives rise to
\begin{eqnarray}
\cI_p(\alpha)=\frac{\Gamma(N)\Gamma(\alpha+1)}{\Gamma(\alpha+N)}\sum^N_{j=1}\frac{p^{\alpha+N-1}_j}{\prod_{i\neq
j}(p_j-p_i)}.
\end{eqnarray}
Taking the derivative of $\cI_p(\alpha)$ with respect to $\alpha$,
we get
\begin{eqnarray}
\cI'_p(\alpha)&=&\displaystyle\Gamma(N)\frac{\dif}{\dif\alpha}\frac{\Gamma(\alpha+1)}{\Gamma(\alpha+N)}\sum^N_{j=1}\frac{p^{\alpha+N-1}_j}{\prod_{i\neq
j}(p_j-p_i)}\notag\\
&&
+\frac{\Gamma(N)\Gamma(\alpha+1)}{\Gamma(\alpha+N)}\sum^N_{j=1}\frac{p^{\alpha+N-1}_j\ln
p_j}{\prod_{i\neq j}(p_j-p_i)}.
\end{eqnarray}
This implies that, for $\alpha=1$,
\begin{eqnarray}\label{pp}
\cI'_p(1)=\frac1N(\psi(2)-\psi(1+N))\sum^N_{j=1}\frac{p^N_j}{\prod_{i\neq
j}(p_j-p_i)} + \frac1N \sum^N_{j=1}\frac{p^N_j\ln p_j}{\prod_{i\neq
j}(p_j-p_i)},
\end{eqnarray}
where $\psi(2)=1-\gamma$, $\psi(1+N)=H_N-\gamma$, where
$\gamma\simeq0.57721$.

We compute the following summation in \eqref{pp},
\begin{eqnarray}
F(p_1,\ldots,p_N):=\sum^N_{j=1}\frac{p^N_j}{\prod_{i\neq
j}(p_j-p_i)}.
\end{eqnarray}
Since it is a rational symmetric function, homogeneous of degree
one, with all singularities removable, it must be a multiple of
$\sum^N_{j=1}p_j$. That is, $F(tp_1,\ldots,tp_N)=tF(p_1,\ldots,p_N)$
for any real number $t$, and
$F(p_{\sigma(1)},\ldots,p_{\sigma(N)})=F(p_1,\ldots,p_N)$ for all
permutations $\sigma \in S_N$. This means that
\begin{eqnarray}
F(p_1,\ldots,p_N)\propto \sum^N_{j=1}p_j.
\end{eqnarray}
Without loss of generality, assume that
$F(p_1,\ldots,p_N)=C\cdot\sum^N_{j=1}p_j$ for some constant $C$. By
setting $(p_1,\ldots,p_N)=(1,0,\ldots,0)$, we get $C=1$. That is,
\begin{eqnarray}
\sum^N_{j=1}\frac{p^N_j}{\prod_{i\neq j}(p_j-p_i)} = \sum^N_{j=1}p_j
= 1.
\end{eqnarray}
Therefore, from \eqref{se}, \eqref{pp} gives rise to
\begin{eqnarray}
-N\cdot \cI'_p(1) &=&\displaystyle (\psi(1+N) -
\psi(2))\sum^N_{j=1}\frac{p^N_j}{\prod_{i\neq j}(p_j-p_i)}
-\sum^N_{j=1}\frac{p^N_j\ln p_j}{\prod_{i\neq
j}(p_j-p_i)}\\
&=& H_N-1 + \bsQ(p).
\end{eqnarray}
Hence $\cI'_p(1) = -\frac1N\Pa{H_N-1+\bsQ(p)}$.
\end{proof}

\subsection*{Appendix B: Proof of $\bsQ(B^\t) \leqslant \bsH(B)$}

To prove the relation $\bsQ(B^\t) \leqslant \bsH(B)$, we prove that
following relation first:
\begin{eqnarray}\label{r4}
\Gamma(N)\int\bsH_\lambda(B)\delta\Pa{1-\sum^N_{i=1}\lambda_i}\prod^N_{j=1}\dif\lambda_j=\frac1N\Pa{\sum^N_{i=1}\bsH(\beta_i)}.
\end{eqnarray}

\begin{proof}
Since $\bsH(B\lambda)-\bsH(\lambda)\leqslant
\bsH_\lambda(B)=\sum^N_{j=1}\lambda_j\bsH(\beta_j)$, it follows that
\begin{eqnarray}
&&\Gamma(N)\int\bsH_\lambda(B)\delta\Pa{1-\sum^N_{i=1}\lambda_i}\prod^N_{j=1}\dif\lambda_j\\
&&=\Gamma(N)\int\Pa{\sum^N_{i=1}\lambda_i\bsH(\beta_i)}\delta\Pa{1-\sum^N_{i=1}\lambda_i}\prod^N_{j=1}\dif\lambda_j\\
&&=\Gamma(N)\sum^N_{i=1}\bsH(\beta_i)\int\lambda_i\delta\Pa{1-\sum^N_{i=1}\lambda_i}\prod^N_{j=1}\dif\lambda_j\\
&&=\Pa{\sum^N_{i=1}\bsH(\beta_i)}\Gamma(N)\int\lambda_1\delta
\Pa{1-\sum^N_{i=1}\lambda_i}\prod^N_{j=1}\dif\lambda_j.
\end{eqnarray}
Denote
\begin{eqnarray}
f(t) =
\Gamma(N)\int\lambda_1\delta\Pa{t-\sum^N_{i=1}\lambda_i}\prod^N_{j=1}\dif\lambda_j.
\end{eqnarray}
Performing Laplace transform ($t\to s$) of $f$, we obtain
\begin{eqnarray}
\widetilde f(s)=\int^\infty_0 f(t)e^{-st}\dif t
=\Gamma(N)\int\prod^N_{j=1}\dif\lambda_j\Pa{\lambda_1\int^\infty_0
\delta\Pa{t-\sum^N_{i=1}\lambda_i} e^{-st}\dif t}.
\end{eqnarray}
That is,
\begin{eqnarray}
\widetilde f(s) &=&
\Gamma(N)\int\prod^N_{j=1}\dif\lambda_j\Pa{\lambda_1\int^\infty_0
\delta\Pa{t-\sum^N_{i=1}\lambda_i} e^{-st}\dif t}\\
&=&
\Gamma(N)\int\prod^N_{j=1}\dif\lambda_j\lambda_1e^{-s\sum^N_{i=1}\lambda_i}\\
&=&\Gamma(N)\int\lambda_1 e^{-s\lambda_1}\dif\lambda_1\times\int
e^{-s\lambda_2}\dif\lambda_2\times\cdots\times\int
e^{-s\lambda_N}\dif\lambda_N\\
&=&\Gamma(N)s^{-N-1}\int^\infty_0 xe^{-x}\dif
x=\frac{\Gamma(N)}{s^{N+1}}.
\end{eqnarray}
Thus $f(t)=\frac1Nt^N$. Therefore
\begin{eqnarray}
\Gamma(N)\int\bsH_\lambda(B)\delta\Pa{1-\sum^N_{i=1}\lambda_i}\prod^N_{j=1}\dif\lambda_j=\frac1N\Pa{\sum^N_{i=1}\bsH(\beta_i)}.
\end{eqnarray}
We have done.
\end{proof}

As a by-product of the formula \eqref{r4}, we have $\bsQ(B^\t)
\leqslant \bsH(B)$.

\end{document}